\newcommand{\Q}[1]{\textsf{P#1}}
\newcommand*{\0}{\textsf{O}}
\newcommand*{\X}{\textsf{X}}
\newcommand*{\V}{}
\newcommand*{\ixe}{\X{} \hspace{0.00 cm}} % For insertion inside text
\renewcommand*{\qed}{~\hfill$\square$}
\begin{document}

    %\iffalse 
	\author{
	    Guillaume Bertholon \and
	    Rémi Géraud--Stewart \and
	    Axel Kugelmann \and 
	    Théo Lenoir \and
	    David Naccache}
    \institute{
               Département d'informatique de l'\'ENS, \'Ecole normale supérieure, \\
               CNRS, PSL Research University, Paris, France \\
               \url{first_name.family_name@ens.fr}
               }
      
    %\fi 
    \title{At Most 43 Moves, At Least 29}
    \subtitle{Optimal Strategies and Bounds for Ultimate Tic-Tac-Toe}
              
	\maketitle 
    
    \begin{abstract}
    Ultimate Tic-Tac-Toe is a variant of the well known tic-tac-toe (noughts and crosses) board game. Two players compete to win three aligned \enquote{fields}, each of them being a tic-tac-toe game. Each move determines which field the next player must play in.
    
    We show that there exist a winning strategy for the first player, and therefore that there exist an optimal winning strategy taking at most 43 moves; that the second player can hold on at least 29 rounds; and identify any optimal strategy's first two moves. 
    \end{abstract}

\section{Introduction}

%\section{Game Description}
    
%\subsection{Standard Ultimate Tic-Tac-Toe}

Ultimate\footnote{The game is also known as \emph{super} or \emph{meta} tic-tac-toe, we chose to follow what seems to be the most common denomination.} Tic-Tac-Toe (U3T) is a two-player zero-sum game with perfect information. U3T is played on a \emph{board} containing 9 \emph{fields} arranged in a $3\times 3$ grid. The nine fields are indexed from $0$ to $8$ as shown in \Cref{board9}. 
Each field itself is further divided into $3\times 3=9$ \emph{spots}. Spots are also indexed from $0$ to $8$ using the same indexing system. In other words, the board contains 81 spots uniquely identified as $(i,j)$, where $i$ is the field and $j$ the spot.

\begin{figure}[!ht]
\centering
\begin{tikzpicture}
\foreach \i in {1,...,8} {
\draw[gray!50!white] (0.333*\i, 0) to (0.333*\i, 3);
\draw[gray!50!white] (0,0.333*\i) to (3,0.333*\i);
}
\draw[thick] (0, 0) rectangle (3, 3);
\draw[thick] (0, 1) rectangle (3, 2);
\draw[thick] (1, 0) rectangle (2, 3);
\node[scale=2] at (0.5, 2.5) {\textbf{0}};
\node[scale=2] at (1.5, 2.5) {\textbf{1}};
\node[scale=2] at (2.5, 2.5) {\textbf{2}};
\node[scale=2] at (0.5, 1.5) {\textbf{3}};
\node[scale=2] at (1.5, 1.5) {\textbf{4}};
\node[scale=2] at (2.5, 1.5) {\textbf{5}};
\node[scale=2] at (0.5, 0.5) {\textbf{6}};
\node[scale=2] at (1.5, 0.5) {\textbf{7}};
\node[scale=2] at (2.5, 0.5) {\textbf{8}};
\draw[dotted] (3, 2) to (5, 3);
\draw[dotted] (3, 1) to (5, 0);
\begin{scope}[xshift=5cm]
\draw (0, 0) rectangle (3, 3);
\draw (0, 1) rectangle (3, 2);
\draw (1, 0) rectangle (2, 3);
\node[scale=1] at (0.5, 2.5) {$(5, 0)$};
\node[scale=1] at (1.5, 2.5) {$(5, 1)$};
\node[scale=1] at (2.5, 2.5) {$(5, 2)$};
\node[scale=1] at (0.5, 1.5) {$(5, 3)$};
\node[scale=1] at (1.5, 1.5) {$(5, 4)$};
\node[scale=1] at (2.5, 1.5) {$(5, 5)$};
\node[scale=1] at (0.5, 0.5) {$(5, 6)$};
\node[scale=1] at (1.5, 0.5) {$(5, 7)$};
\node[scale=1] at (2.5, 0.5) {$(5, 8)$};
\end{scope}
\end{tikzpicture}
\caption{Indexing for fields and spots. The board has $9$ fields (\emph{left}). Each field contains 9 spots, and we write $(i,j)$ for the $j$-th spot of the $i$-th field (\emph{right}).}\label{board9}
\end{figure}

Players alternately place one mark in a free spot, in a field determined according to the rules explained below which we call the \emph{active} field. By convention, the first player (\emph{Xavier}) uses \ixe marks and the second (\emph{Olivia}) uses \0 marks\footnote{The players' initials remind the marks \X,\0.}. %When we wish to talk about two players in general (i.e. without specifying who plays a specific mark) we will use the generic terms Alice and Bob. 
The active field is constrained by the previous player's move:
\begin{itemize}
    \item If the previous player's move was $(i,j)$, and there are free spots in field $j$, then the active field is $j$;
    \item If the previous player's move was $(i,j)$, and there are no free spots in field $j$, then the current player can freely chose the active field. The same applies if there is no previous move (beginning of the game).
\end{itemize} 
When a player aligns three of their marks (in line, column or diagonal fashion) in a field we say they \emph{win} that field. Any further action in that field will not change this status and is it not possible to win an already won field. When a player aligns three won fields (in line, column, or diagonal fashion) they \emph{win the game}.
It may happen that a field is filled without any player winning it, in which case we say it is a \emph{draw}. Similarly, the full board may be filled without a winner for the game.

By design, there are at most $81$ moves in a game of U3T. \Cref{game:board81} shows an example ongoing game where the first move, by Xavier, was $(4, 4)$ in the center of the board. Next move is Olivia's in the field $0$ (top-left corner of the board).

\begin{figure}[!ht]
\centering
\scalebox{0.7}{
\tikzset{every node/.style={font=\sffamily}}
% use scale and fontsize options to change size
\begin{lpsudoku}[scale=1]
\setrow{9}{$\0_{11}$,\V,$\X_{14}$,\V,\V,\V,\V,\V,\V}
\setrow{8}{$\0_5$,\V,$\X_{12}$,\V,\V,\V,\V,\V,\V}
\setrow{7}{\V,\V,$\X_2$,\V,\V,\V,\V,$\0_{15}$,\V}
\setrow{6}{$\X_{10}$,\V,\V,$\0_1$,\V,\V,$\0_{13}$,\V,\V}
\setrow{5}{\V,\V,\V,\V,$\X_0$,\V,\V,\V,\V}
\setrow{4}{$\X_6$,\V,\V,\V,\V,\V,\V,\V,\V}
\setrow{3}{\V,\V,\V,$\X_{16}$,\V,\V,$\X_4$,\V,\V}
\setrow{2}{\V,\V,\V,$\0_9$,\V,\V,\V,\V,\V}
\setrow{1}{\V,$\0_7$,\V,\V,$\X_8$,\V,\V,\V,$\0_3$}
\end{lpsudoku}
}
\caption{Example of a U3T board, Xavier has won the field 0. Last move was $\X_{16}$ and it is now Olivia's turn to play, in the active field $0$. Note that although Olivia can align three $\0$ in this field, this field has already been won by Xavier and such a move would therefore have no effect.}\label{game:board81}
\end{figure}

\section{A winning strategy for Xavier}\label{sec:strategy}
In conventional tic-tac-toe, the first player cannot lose if they play perfectly. If the second player also plays perfectly, then the game ends in a draw. In U3T the situation is different and the first player has a simple winning strategy. Such a strategy seems to have been known before\footnote{We found an implementation dating before 2013: \url{https://tinyurl.com/ULTTT}.} but
to the best of the authors' knowledge it is the first time that such a strategy is formally described, analysed, and shown to always succeed. Several results are known about the structure of winning U3T strategies \cite{george2016group}.
Xavier's winning strategy has three phases: an \emph{opening}, a \emph{middlegame}, and an \emph{endgame}. 

\subsection{Opening}
\begin{enumerate}
    \item First move: play $(4, 4)$ (centre spot of the centre field). Olivia is therefore constrained to play in the central field, say $(4, j)$.
    \item Next seven moves: Xavier plays $(j, 4)$ (centre spot of the field $j$). Olivia must again play in the central field. 
    \item After Olivia played her eighth move, the middle field is full as in \Cref{game:win_turn16}, and it is Xavier's turn.
\end{enumerate} 
    
\begin{figure}[!ht]
\centering
\scalebox{0.5}{
\tikzset{every node/.style={font=\sffamily}}
% use scale and fontsize options to change size
\begin{lpsudoku}[scale=1]
\setrow{9}{\V,\V,\V,\V,\V,\V,\V,\V,\V}
\setrow{8}{\V,\V,\V,\V,\X,\V,\V,\X,\V}
\setrow{7}{\V,\V,\V,\V,\V,\V,\V,\V,\V}
\setrow{6}{\V,\V,\V,\0,\0,\0,\V,\V,\V}
\setrow{5}{\V,\X,\V,\0,\X,\0,\V,\X,\V}
\setrow{4}{\V,\V,\V,\0,\0,\0,\V,\V,\V}
\setrow{3}{\V,\V,\V,\V,\V,\V,\V,\V,\V}
\setrow{2}{\V,\X,\V,\V,\X,\V,\V,\X,\V}
\setrow{1}{\V,\V,\V,\V,\V,\V,\V,\V,\V}
\end{lpsudoku}
}
\caption{Typical board after 16 turns using the winning strategy. It is Xavier's turn.}
\label{game:win_turn16}
\end{figure}

\subsection{Middlegame}
It is now Xavier's turn, and one field hasn't been played yet. We assume that field to be field $0$ (the proof is similar if another field is chosen).
\begin{enumerate}
    \item Xavier plays $(0,0)$, forcing Olivia to play in field $0$:
    \begin{enumerate}
        \item If Olivia plays $(0, k)$ with $k \in \{1, 2, 3, 5, 6, 7\}$ then Xavier plays $(k, 0)$, forcing Olivia back to field $0$;
        \item If Olivia plays $(0, k)$ with $k \in \{4, 8\}$, then Xavier plays $(8, 0)$. Indeed this is possible because field $4$ is full and Xavier can choose $8$ as the active field. If Xavier has already played this position, then $(8,8)$ is played instead, and the \emph{middlegame} phase is over.
    \end{enumerate}
Observe that Xavier cannot be sent twice from the field $0$ to the same field, and all the spots Xavier might want to use are free at the beginning of the strategy. 
\end{enumerate}
The game is not over yet: there is at most one field with three \ixe{}s (field $8$), and Olivia has only played in two fields. Moreover, since there is a finite number of spots in the first field, we know that after at most 17 turns, we are in the situation where Xavier plays $(8, 8)$. Therefore Xavier wins field 8 after at most 17 turns. The situation is analoguous to that of \Cref{game:win_turn25}.
\begin{figure}[!ht]
\centering
\scalebox{0.5}{
\tikzset{every node/.style={font=\sffamily}}
% use scale and fontsize options to change size
\begin{lpsudoku}[scale=1]
\setrow{9}{\X,\0,\V,\X,\V,\V,\V,\V,\V}
\setrow{8}{\0,\0,\V,\V,\X,\V,\V,\X,\V}
\setrow{7}{\V,\V,\0,\V,\V,\V,\V,\V,\V}
\setrow{6}{\X,\V,\V,\0,\0,\0,\V,\V,\V}
\setrow{5}{\V,\X,\V,\0,\X,\0,\V,\X,\V}
\setrow{4}{\V,\V,\V,\0,\0,\0,\V,\V,\V}
\setrow{3}{\V,\V,\V,\V,\V,\V,\X,\V,\V}
\setrow{2}{\V,\X,\V,\V,\X,\V,\V,\X,\V}
\setrow{1}{\V,\V,\V,\V,\V,\V,\V,\V,\X}
\end{lpsudoku}
}
\caption{Typical board after 25 turns into the winning strategy.}
\label{game:win_turn25}
\end{figure}

\subsection{Endgame}
Now the last step of the strategy: whichever field $k$ is active, when it is Xavier's turn, if $(k, 0)$ can be played then it is played; otherwise $(k, 8)$ is played. Eventually, Xavier wins.

\begin{figure}[!ht]
\centering
\scalebox{0.5}{
\tikzset{every node/.style={font=\sffamily}}
% use scale and fontsize options to change size
\begin{lpsudoku}[scale=1]
\setrow{9}{\X,\0,\0,\X,\V,\V,\X,\V,\V}
\setrow{8}{\0,\0,\0,\V,\X,\V,\V,\X,\V}
\setrow{7}{\0,\0,\0,\V,\V,\X,\V,\V,\X}
\setrow{6}{\X,\V,\V,\0,\0,\0,\X,\V,\V}
\setrow{5}{\V,\X,\V,\0,\X,\0,\V,\X,\V}
\setrow{4}{\V,\V,\X,\0,\0,\0,\V,\V,\X}
\setrow{3}{\X,\V,\V,\X,\V,\V,\X,\0,\0}
\setrow{2}{\V,\X,\V,\V,\X,\V,\0,\X,\0}
\setrow{1}{\V,\V,\X,\V,\V,\X,\0,\0,\X}
\end{lpsudoku}
}
\caption{The grid after the endgame. Players may keep playing after Xavier wins the game.}\label{game:keepplaying}
\end{figure}

\section{Strategy analysis}

Consider the following properties:
\begin{itemize}
\item\textbf{\Q{1}} For all $i$ in $A:=\{1,2,3,5,6,7\}$ if $(i,0)$ and $(i,8)$ are taken by Xavier, then both $(0,i)$ and $(8,i)$ are taken by Olivia. Note that in this case, Xavier has won field $i$, because he has already taken $(i, 4)$ at the beginning of the game.

\item \textbf{\Q{2}} $\forall i \in A$, if only one of $(i,0),(i,8)$ is taken by Xavier it is $(i,0)$, and only one of $(0,i)$ and $(8,i)$ are taken by Olivia, and if it is in field 8, then Olivia plays in field 0.
\item \textbf{\Q{3}}: $\forall i \in A$, if neither $(i, 0)$ nor $(i, 8)$ are taken by Xavier, then neither $(0, i)$ nor $(8, i)$ are taken by Olivia
\item \textbf{\Q{4}}: Olivia must play in either field 0 or 8, and the field where she plays is not full. Assuming \Q{1}, the fields 0 and 8 cannot be full at the same time before the game ends (if both field 0 and 8 are full, Xavier has won fields 1,2,3,5,6,7 and 8 creating a winning situation with fields 2,5 and 8)
\item \textbf{\Q{5}}: Olivia plays in field 0 if and only if there exists a unique $j \in A$ such that $(8,j)$ is taken by Olivia but $(0,j)$ is not. If Olivia is in field 8 there exist no $j \in A$ such that $(8,j)$ is taken by Olivia but $(0,j)$ is not.

We say that $i$ \emph{verifies} \Q{5} if $i \in A$ and $(8,i)$ is taken by Olivia but $(0,i)$ is not. 
\Q{5} means Olivia plays in field 0 if and only if there exists a unique $i\in A$ such that $i$ verifies \Q{5}, and Olivia plays in field 8 if and only if no $i \in A$ verifies \Q{5}.
\item \textbf{\Q{6}}: Olivia has only played in fields $0, 4, 8$ and Xavier has not played in $(0,i)$ and $(8,i)$ $\forall i \in A$.
\end{itemize}

\begin{lemma}
Properties $\mathsf{P1}$--$\mathsf{P6}$ hold at the beginning of the endgame phase.
\end{lemma}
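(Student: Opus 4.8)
The plan is to fast-forward through the \emph{opening} and \emph{middlegame} of \Cref{sec:strategy}, record the exact position reached when the \emph{endgame} begins, and then check \Q{1}--\Q{6} against it one property at a time; almost all will hold vacuously. To pin down the position: after the eight opening rounds (\Cref{game:win_turn16}) field $4$ is full, Olivia owns its eight non-centre spots and has therefore \emph{won} it, Xavier owns the centre spot $(i,4)$ of every field $i\neq 0$, field $0$ is empty, and, as Olivia's eighth move was $(4,0)$, the active field is $0$ with Xavier to move; I adopt the labelling of \Cref{sec:strategy}, so that $0$ is the empty field and $8$ the field through which Xavier routes Olivia. I then follow the middlegame move by move: Xavier plays $(0,0)$, after which Olivia is repeatedly forced back to field $0$ and plays there pairwise-distinct free spots $(0,k)$, Xavier replying $(k,0)$ when $k\in A$ and replying inside field $8$ when $k\in\{4,8\}$. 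By the observations in \Cref{sec:strategy}---Xavier is never sent twice to the same field, and every spot he needs is still free---the two visits to $\{4,8\}$ (both occur, the second being Olivia's last move in field $0$) are met by $(8,0)$ and then $(8,8)$, the latter closing the phase. Let $S\subseteq A$ be the set of $i$ for which Olivia has played $(0,i)$ (equivalently, for which Xavier holds $(i,0)$). Then the position at the start of the endgame is: Olivia has played only in fields $0$ and $4$; Xavier has won field $8$ via its diagonal $(8,0),(8,4),(8,8)$, and field $8$ contains only those three marks; for every $i\in A$, Xavier holds $(i,4)$, holds $(i,0)$ iff $i\in S$, and never holds $(i,8)$, while Olivia holds $(0,i)$ iff $i\in S$ and never holds $(8,i)$; and it is Olivia's turn with active field $8$, which has six free spots.

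The six checks are then routine. \Q{1} holds vacuously, as its hypothesis is never met: Xavier holds no $(i,8)$ with $i\in A$. For \Q{2}: if its hypothesis holds then the single spot Xavier holds among $(i,0),(i,8)$ is $(i,0)$ (equivalently $i\in S$); Olivia then holds precisely one of $(0,i),(8,i)$, namely $(0,i)$; and since that spot lies in field $0$, not field $8$, the closing implication is vacuous. For \Q{3}: its hypothesis forces $i\notin S$, so Olivia holds neither $(0,i)$ (because $i\notin S$) nor $(8,i)$ (because she has not played field $8$). \Q{4} holds because Olivia must play in the active field $8$, which is not full; in particular fields $0$ and $8$ are not both full here, since $8$ is not. \Q{5} holds because Olivia plays in field $8$: both sides of its equivalence are then false, and its remaining clause merely asks that no $i\in A$ verify \Q{5}, which is clear as Olivia owns no spot of the form $(8,i)$. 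Finally \Q{6} holds since Olivia's moves lie in fields $0$ and $4$, both in $\{0,4,8\}$, while Xavier's spots in fields $0$ and $8$ are $(0,0)$ and $(8,0),(8,4),(8,8)$, none of which is $(0,i)$ or $(8,i)$ with $i\in A$.

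The only step calling for care is the middlegame trace: one must be sure that Olivia genuinely plays both $(0,4)$ and $(0,8)$ before the phase ends (so Xavier really reaches $(8,8)$ and wins field $8$), that she is never forced to move into an already-full field $0$, and that no reply of Xavier's inside a field of $A$ ever lands on that field's spot $8$. All three follow from the bookkeeping already laid out in \Cref{sec:strategy}; once the position is fixed, \Q{1}--\Q{6} drop out at once.
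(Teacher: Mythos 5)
Your proposal is correct and follows essentially the same route as the paper: it verifies \Q{1}--\Q{6} directly against the position reached at the end of the middlegame, observing in particular that no spot $(8,i)$ or $(i,8)$ with $i\in A$ is occupied, that $(0,i)$ is Olivia's exactly when $(i,0)$ is Xavier's, and that Olivia is sent to field $8$, which is not full. The only difference is that you reconstruct the middlegame position explicitly (via the set $S$), which the paper leaves implicit; the substance of the verification is the same.
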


\begin{proof}
At the beginning of the endgame phase, for every $i \in A$, $(8, i)$ is empty. Moreover, if $(0, i)$ is taken by Olivia, then Xavier has already taken $(i, 0)$. This confirms that \Q{1}, \Q{2} and \Q{3} hold at the beginning of the endgame phase. Similarly, Olivia must play in field 8, which isn't full, thus \Q{4} and \Q{5} hold. Finally, \Q{6} is also verified by design of the middlegame phase. \qed 
\end{proof}

\begin{lemma}
If the properties $\mathsf{P1}$--$\mathsf{P6}$ hold and it is Xavier's turn, then they hold at the next round.
\end{lemma}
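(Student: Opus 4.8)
The plan is to fix the board at the start of Xavier's turn, assume $\mathsf{P1}$--$\mathsf{P6}$, let $k$ denote the active field, and follow one full round: Xavier's move, then Olivia's forced reply. I would first use $\mathsf{P6}$ (Olivia has only played in fields $0,4,8$) together with $\mathsf{P4}$ to pin down $k$. Since field $4$ has been full since the opening, Olivia's previous move was $(0,m)$ or $(8,m)$, so $k=m$, which is either in $A$ or equal to $0$ or $8$; a short argument shows field $k$ is not full (for $k\in A$ only Xavier ever plays there, having placed at most the three marks $(k,0),(k,4),(k,8)$; for $k\in\{0,8\}$ this is exactly the non-fullness clause of $\mathsf{P4}$), so Xavier genuinely moves in field $k$ as prescribed.

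Next I would check that the move dictated by the endgame rule — $(k,0)$ if free, otherwise $(k,8)$ — is legal, i.e.\ the chosen spot is free. For $k\in A$ this is where $\mathsf{P2}$ and $\mathsf{P3}$ do the work: they record exactly which of $(k,0),(k,8)$ Xavier already owns (never $(k,8)$ alone, never both while Olivia is still being sent to $k$), and since Olivia just sent Xavier to $k$ via $(0,k)$ or $(8,k)$, the "$k$ verifies $\mathsf{P5}$'' dictionary and $\mathsf{P3}$ tell us which spot is still open. For $k\in\{0,8\}$ one uses that Xavier's reserved corner spots $(0,0)$ and $(8,0)$ (or $(8,8)$) were taken in the middlegame, so the spot named by the rule is available. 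Xavier's move then sends Olivia into field $0$ (if he played spot $0$) or field $8$ (if spot $8$), and I verify this is consistent with $\mathsf{P5}$ by tracking how the move changes the set of indices verifying $\mathsf{P5}$ (playing $(i,0)$ makes $i$ stop verifying it, etc.), so that the updated board still satisfies the ``unique verifier $\iff$ field $0$, no verifier $\iff$ field $8$'' dichotomy.

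Finally I would let Olivia make her forced reply $(0,m)$ or $(8,m)$ and re-establish all six properties for the resulting board, with Xavier to move. $\mathsf{P6}$ is immediate: Olivia stays in $\{0,8\}$, and Xavier's move was either $(k,0)/(k,8)$ with $k\in A$ or a play in field $0$ or $8$ at spot $0$ or $8$, so he still owns none of the $(0,i),(8,i)$ with $i\in A$. For $\mathsf{P1}$--$\mathsf{P3}$ I would update the Xavier-status of field $k$ (the only field whose Xavier-marks changed) and the new Olivia mark $(0,m)$ or $(8,m)$, checking the implications case by case and using that Xavier is sent to a given field of $A$ at most twice (once via $(0,i)$, once via $(8,i)$), so that reaching ``both $(k,0),(k,8)$ owned by Xavier'' forces ``both $(0,k),(8,k)$ owned by Olivia''. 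For $\mathsf{P4}$ I would invoke the observation already recorded in its statement: fields $0$ and $8$ cannot both be full before the game ends, since by $\mathsf{P1}$ that would mean Xavier has won all of $1,\dots,8$ and in particular the line $2,5,8$; hence the field Olivia is next forced into is non-full. $\mathsf{P5}$ was arranged in the previous step.

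I expect $\mathsf{P5}$ to be the main obstacle: it is the bookkeeping invariant coupling Xavier's purely local, deterministic rule (``spot $0$ if possible, else spot $8$'') to a global count over $A$, and the delicate point is that each move of each player shifts this count by exactly the right amount so the uniqueness/parity condition survives — $\mathsf{P5}$ effectively acts as a potential function, and getting its update right in every sub-case (which of $(k,0),(k,8)$ is free, whether $m\in A$ or $m\in\{0,4,8\}$, whether $k$ was already a verifier) is the heart of the induction. The legality bookkeeping of the second step is the other place errors could creep in, but it is routine once the ``at most twice per field'' observation and the middlegame reservations are in hand.
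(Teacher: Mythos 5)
Your plan is essentially the paper's own induction, only phase-shifted: the paper treats a round as Olivia's forced move followed by Xavier's reply (so the invariants are checked with Olivia to move), whereas you check them with Xavier to move and run Xavier-then-Olivia. The substance is the same in both cases: the case split on field $0$ versus field $8$ and on whether spot $0$ of the active field is free, legality of Xavier's move via \Q{1}--\Q{3} (``never $(k,8)$ alone, never both'', since Olivia has just played one of $(0,k),(8,k)$), the verifier-set bookkeeping for \Q{5}, the not-both-full observation for \Q{4}, and the trivial preservation of \Q{6}. So the approach is correct and matches the paper.

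Two points in your write-up need repair. First, your branch $k\in\{0,8\}$ is handled wrongly: if the active field were $0$ (resp.\ $8$), then \emph{both} spot $0$ and spot $8$ of that field are already occupied by the end of the middlegame ($(0,0),(8,0),(8,4),(8,8)$ by Xavier, $(0,4),(0,8)$ by Olivia), so the endgame rule would name no free spot at all, contrary to your claim that ``the spot named by the rule is available''. The branch is in fact vacuous and should be dismissed the way the paper does: spots $0,4,8$ of fields $0$ and $8$ are filled and stay filled, so Olivia's move in field $0$ or $8$ is always at a spot $m\in A$, hence $k=m\in A$. Second, your justification of \Q{4} (``fields $0$ and $8$ cannot both be full, hence the field Olivia is next forced into is non-full'') is a non sequitur as written: not-both-full only yields that \emph{some} field is non-full. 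You need the \Q{5} coupling, as the paper uses it: if Olivia is sent to field $0$, it is because some $j\in A$ verifies \Q{5}, and then $(0,j)$ is free; if she is sent to field $8$ and field $8$ were full, the no-verifier condition together with \Q{2}, \Q{3}, \Q{6} forces field $0$ to be full as well, in which case Xavier has already won and the induction stops there. With these two fixes, and with the case-by-case updates of \Q{1}--\Q{3} and \Q{5} actually carried out as you outline, your argument coincides with the paper's proof.
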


\begin{proof}
Assume that the properties are true. We exhaust the possibilities:
\begin{itemize}
\item If Olivia must play in field $0$, and plays in spot $i$, then we have $i \in A$ since spots $0, 4, 8$ are already taken (cf. middlegame). By hypothesis of induction, $(i, 0)$ or $(i, 8)$ is empty.\newline
\begin{itemize}
\item If $(i,0)$ is empty, then Xavier plays in this spot.\newline

Let's prove \Q{1}, \Q{2} and \Q{3}:
The properties hold for all for all $j \in A$ except for $i$ because no concerned spot was taken.
Moreover, by induction with \Q{2}, \Q{3} and \Q{6}, we know that $(0,i),(8,i),(i,0)$ and $(i,8)$ were empty before the round.
At the end of the round, only one spot is taken by Xavier in field $i$, and it is $(i,0)$. $(0,i)$ has been taken by Olivia, and $(8,i)$ is still empty. All in all, \Q{1}, \Q{2} and \Q{3} are verified at the end of the round.\newline

Here is the proof for \Q{4} and \Q{5}:
By induction on \Q{5} there is a unique $j \in A$ that verifies \Q{5} at the beginning of the round. We have $j\not =i$, because we have shown above that $(8,i)$ is empty at the beginning of the round.
Moreover, $i$ does not verify \Q{5} at the end of the round either. Since only the situation of $i$ could have changed during the round regarding \Q{5}, the property is still verified at the end of the round (because only $j$ verifies \Q{5} at this point).
As a corollary, field 0 (where Olivia has to play next round) is not full as $(0,j)$ is empty. This gives \Q{4} at the end of the round.

\medskip 
%We now have to check several properties to prove the induction step: field 0 is non-full (for \Q{4} and \Q{5}), and we still have \Q{1}, \Q{2} and \Q{3} for field i (since we have not modified others).\\
%First field 0 is non-full because of \Q{5}: since Olivia is playing in field 0, we can  take  $j\in A$ such that $(8,j)$ is taken by Olivia but $(0,j)$ was empty before the step.\\
%Then $(j,0)$ was taken by Xavier at the previous step, but spot 8 was not by \Q{2}, this implies $j \neq i$ as at the beginning of the step $(i,1)$ was empty and $(j,1)$ was not.\\
%Also $(0,j)$ as $i\neq j$. This ends up the proof of \Q{4}, and that j still verifies \Q{5}.

%Secondly,  as field i spot 0 was empty, we know by \Q{2} and \Q{3} that the $(i,0)$ and $(i,8)$ were empty and $(8,i)$ too, and thus at the end of the step, $(i,0)$ is taken by Xavier, and $(8,i)$ is empty, and $(0,i)$ is taken by Olivia, so we still have \Q{1}, \Q{2}, \Q{3} for $i$.
%Thus we still have last property $j$ still verifies the property, and all the other numbers do not (the only potential candidate for \Q{5} is $i$ since the other important fields have not changed during the step, and \Q{5} was already true. 
\item If $(i,0)$ is already taken, by \Q{1}, $(i,8)$ is empty. Indeed, at the beginning of the round, the situation is either the one described in \Q{2} or \Q{3} (because $(0,i)$ is empty), in both of which $(i,8)$ is empty.\\
Thus $i$ verifies \Q{5} at the beginning of the round (and is the only one to do so by definition of \Q{5}). Xavier plays in $(i,8)$.\newline

%After the play of Xavier, which will be in $(i,8)$, to prove the induction step, we need to check that: \Q{1}, \Q{2} and \Q{3} are still true for $j$, \Q{5}, and \Q{4} by proving that if the field 8 is empty so does the field 0.\\
Since at the end of the step $(0,j),(8,j)$ are taken by Olivia, $(j,0)$ and $(j,8)$ by Xavier we have \Q{1}, \Q{2} and \Q{3} verified for i at the end of the round.
\Q{1}, \Q{2} and \Q{3} remain true for other elements of $A$, because the spots concerned didn't change during the round.\newline

\Q{5} is a simple conclusion from the unicity of j at the last step, i.e. if one $i$ verifies \Q{5},it must be $j$ (otherwise this $i$ would have verified \Q{5} at the last step), and since $j$ does not verify \Q{5}, \Q{5} is proved.

With this established, field 8 is full only if field 0 is also full. If this is the case, the game has ended. Else, \Q{4} is verified.

%If field 8 is empty but not field 0, let's take  $h \in A$ such that $(0,h)$ is empty, then h verifies \Q{5} with \Q{6} and \Q{2} and thus since $h\neq j$ , $h$ verified \Q{5} at the last step,  which is contradictory, so we have proven \Q{4}.

\end{itemize}

\medskip 

\item If Olivia must play in field 8, and decides to play in spot i with  $i \in A$ (spots 0, 4 and 8 are already taken in this field, so Olivia has to do this choice)

\begin{itemize}
\item If $(i,0)$ is empty then Xavier plays in $(i,0)$.\newline

%To prove the induction property we have to prove that we have \Q{5}, \Q{4} and \Q{1}, \Q{2}, \Q{3} for $i$.\\
One can see that if some $j \in A$ verifies \Q{5} at the end of the round, then if $j\neq i$, it verified \Q{5} at the step before, which is contradictory by induction hypothesis.
By induction on \Q{2}, \Q{3} and \Q{6}, at the beginning of the round, $(i,0)$, $(i,8)$, $(0,i)$ and $(8,i)$ are empty, and thus at the end of the round, $(i,0)$ is taken by Xavier, $(8,i)$ is taken by Olivia  and the two others are still empty, thus i verifies \Q{5}, and this proves \Q{5}.
\Q{4} is trivial as $(0,i)$ is empty.\newline

\Q{1}, \Q{2} and \Q{3} hold for every $j\neq i,j\in A$ by induction (because the situation did not change), and also for $i$ (we proved above that we are in the situation of \Q{1}).\newline

\item %If $(i,0)$ is not free since $(8,i)$ was free at the beginning of the step with \Q{1} and \Q{6} and \Q{2} $(0,i)$ is taken by Olivia and $(i,0)$ is taken by Xavier and thus $(i,8)$ is free and Xavier will play in $(i,8)$.\\
If $(i,0)$ is not free, then $(i,8)$ must be free because \Q{1},\Q{2} and \Q{3} are verified at the beginning of the round. That ensures we are in the situation described by \Q{2} before Olivia plays, that is $(i,0)$ and $(0,i)$ are taken, and $(i,8)$ and $(8,i)$ are free.
According to the strategy, Xavier plays in $(i,8)$.\newline

At the end of the round, \Q{1}, \Q{2} and \Q{3} hold. 
Indeed the situation is the one described in \Q{1} for $i$. Indeed, spots $(i,0)$ and $(i,8)$ are taken by Xavier and spots $(0,i)$ and $(8,i)$ are taken by Olivia.
For $j\neq i$, $j\in A$, nothing changed during the round, so $j$ does not bring a contradiction to \Q{1}, \Q{2} or \Q{3}.\newline
%To prove the induction step, we need to check that we have \Q{1}, \Q{2}, \Q{3} for i, \Q{5}, and \Q{4} by proving that if the field 8 is full the field 0 is full too.\\
%Since at the end (8,i) and $(0,i)$ are taken by Olivia and (i,8) (i,0) by Xavier, $i$ verifies \Q{1}, \Q{2} and \Q{3}.\\

\Q{4} also holds because Olivia was sent in field 8. If the field is full, the game is finished because \Q{1} is verified, and the proof ends. \newline
%If $j$ verifies \Q{4}, then $j=i$ otherwise it would have verified \Q{4} the step before, and thus it is contradictory with \Q{4}. But i does not verifies \Q{4} as $(0,i)$ is taken, which proves \Q{5}.\\

Note that $i$ does not verify \Q{5} because $(0,i)$ is taken, and by induction on \Q{5}, no other $j\in A$ verifies \Q{5} either. This proves \Q{5} at the end of the round.
%If the field 8 is full, and the field 0 is not, we take h such that  $h \in A$ and $(0,h)$ is empty, then $h$ verifies the last property and $h\neq i$ so $h$ verified the last property at the beginning of the step, which is contradictory with \Q{5}. 

\end{itemize}
Finally, in all cases, \Q{6} holds trivially at the end of the round. \qed 
\end{itemize}
%This ends our proof: if we consider the first step where both field 0 and 8 are full (which will happen as Olivia only play in those three distinct fields), for every $h\in A$ by first step of the strategy $(i,4)$ is taken by Xavier, and by \Q{1} since field 0 and 8 are full, Xavier has won $(i,0)$ and $(i,8)$, the field $h$ is won by Xavier as he is the only one who played in the field hand he played in spots $0, 4, 8$.\\
%Since he also won the field 8 as described at the second part of the strategy, Xavier won!
\end{proof}

\begin{corollary}
Following the strategy described above, properties $\mathsf{P1}$--$\mathsf{P6}$ hold as long as Xavier does not win the game. If Xavier does not win, we can construct an infinite sequence of rounds. Since the game terminates in at most 81 moves, this is impossible, and therefore Xavier eventually wins the game.
\end{corollary}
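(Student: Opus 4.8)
The plan is a routine induction over the rounds of the endgame, closed off by a finiteness argument. I would first fix the meaning of a \emph{round}: one move by Olivia, followed by the move Xavier makes in reply according to the endgame strategy. The first lemma then serves as the base case --- it says that $\mathsf{P1}$--$\mathsf{P6}$ hold the moment the endgame starts --- and the second lemma serves as the induction step, carrying $\mathsf{P1}$--$\mathsf{P6}$ across one round. So, by induction, as long as Xavier has not won the game, all six properties hold at the start of every round.

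I would then establish that, whenever Xavier has not won, at least one further round can be played; this is where the six properties are used. Olivia always has a legal move, because $\mathsf{P4}$ says her active field is $0$ or $8$ and is not full. Xavier always has a legal reply: when Olivia plays spot $i$ of her active field we have $i\in A$ (spots $0,4,8$ have been occupied there since the middlegame), and $\mathsf{P1}$ forces at least one of $(i,0),(i,8)$ to be empty --- were both taken by Xavier, $\mathsf{P1}$ would make $(0,i)$ and $(8,i)$ Olivia's, contradicting that the spot Olivia has just taken, one of these two, was free immediately before her move --- so the spot the endgame strategy prescribes really exists, and field $i$ (which only ever receives Xavier's marks) is certainly not full. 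Finally the game has not ended in any other way: it is not a draw, since by $\mathsf{P4}$ together with $\mathsf{P1}$ fields $0$ and $8$ are never simultaneously full and hence the board is never full; and Olivia has not won, since by $\mathsf{P6}$ she has only played in fields $0,4,8$, so the only fields she can possibly win are $0$ and $4$ --- field $8$ having already been won by Xavier in the middlegame --- and those two fields do not lie on a common line.

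Putting the two ingredients together then gives the statement: if Xavier never won, the lemmas would keep $\mathsf{P1}$--$\mathsf{P6}$ true indefinitely and the paragraph above would let us keep playing round after round without end, producing a game of U3T with infinitely many moves, which is impossible since a game lasts at most $81$ moves; hence Xavier wins after finitely many rounds. The point that takes the most care is the legality of Xavier's prescribed move at every step --- that is, extracting it from $\mathsf{P1}$ --- but this is exactly the bookkeeping already carried out inside the proof of the induction-step lemma, so at the level of the corollary it only has to be cited; one should also be explicit that a round is two plies, so that ``infinitely many rounds'' genuinely contradicts the $81$-move bound. (Tracking the $16$ opening moves, the at most $17$ middlegame moves, and the bounded endgame then refines this to the ``at most $43$ moves'' of the title, but that is a separate matter from the corollary itself.)
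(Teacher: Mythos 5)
Your argument is essentially the paper's own: the corollary is proved exactly by the induction you describe (first lemma as base case, second as inductive step), plus the observation that the invariants guarantee another playable round, so a non-winning Xavier would force an infinite game, contradicting the $81$-move bound. Your extra bookkeeping (legality of both players' moves, no draw, Olivia never wins) is a welcome elaboration of what the paper leaves implicit. One justification in it is wrong, though: fields $0$ and $4$ \emph{do} lie on a common line, namely the diagonal $0$--$4$--$8$; moreover Olivia has in fact already won field $4$ at the end of the opening (she owns all eight non-central spots of that field), so ruling out her victory genuinely matters. The correct reason she cannot win the game is that the only board line through fields $0$ and $4$ is that diagonal, whose third field, $8$, was won by Xavier during the middlegame (and $\mathsf{P6}$ keeps her out of every other field), so she can never complete a line of fields. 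With that repair your proof is correct and coincides with the paper's.
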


\section{Results on an optimal winning strategy}
\subsection{Upper bound}
\begin{lemma}\label{lem:upperbound}
An optimal winning strategy for Xavier takes at most 43 moves.
\end{lemma}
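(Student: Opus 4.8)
The plan is to bound the optimum simply by counting how many moves Xavier's explicit strategy of \Cref{sec:strategy} needs to win, in the worst case over Olivia's replies: an optimal winning strategy is by definition at most as long, so any such bound transfers.

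First I would dispose of the opening, which is rigid in length: the move $(4,4)$ and every reply $(j,4)$ force Olivia back into the centre field, and that field is filled after precisely eight of her moves. Hence the opening always takes $16$ moves, after which field~$4$ is a draw, one field (taken to be field~$0$ after relabelling) is still empty, and each of the remaining seven fields carries Xavier's centre mark.

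For the middlegame, Olivia is confined to field~$0$ and must pick pairwise distinct free spots there. Let $m\in\{0,\dots,6\}$ be the number of spots she takes in $\{1,2,3,5,6,7\}$ before her second spot in $\{4,8\}$; the middlegame is then Xavier's $(0,0)$ followed by $m+2$ rounds, i.e.\ $2m+5$ moves, ending with $(8,8)$. At that point, as noted in the text, Xavier owns field~$8$, and in addition he holds $(i,0)$ for the $m$ indices $i$ that were visited — call this set $S$ — as well as $(i,4)$ for every $i\in A:=\{1,2,3,5,6,7\}$. The worst case $m=6$ recovers the $17$-move figure.

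The real work is the endgame, which I expect to be the most delicate part. Using \Q{1}--\Q{6} — which the preceding lemmas and corollary guarantee to persist until Xavier wins — Olivia is always sent back to field~$0$ or field~$8$ and must play at some spot $i\in A$, which bounces Xavier into field~$i$ (not full, as it holds at most the three marks $(i,0),(i,4),(i,8)$), where he plays $(i,0)$ if free and otherwise $(i,8)$. Since he already holds $(i,4)$, he wins field~$i$ as soon as both $(i,0)$ and $(i,8)$ are his — on Olivia's first visit to $i$ when $i\in S$, on her second otherwise — and the only field-lines he can ever complete are $\{2,5,8\}$ and $\{6,7,8\}$, fields~$0$ and~$4$ being out of reach; so the game ends exactly when he wins the second member of $\{2,5\}$ or of $\{6,7\}$. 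Writing $r_i$ for the number of endgame rounds spent on field~$i$, one has $r_i\le 2$ always and $r_i\le 1$ when $i\in S$; moreover, when the game ends the pair that was \emph{not} completed has at most one of its two fields won, so it accounts for one round fewer than this naive bound. Summing over the six fields of~$A$ leaves at most $11-m$ endgame rounds, hence at most $2(11-m)$ endgame moves, so the game is over after at most $16+(2m+5)+2(11-m)=43$ moves regardless of~$m$, which is the claim. Exhibiting the line in which Olivia plays $1,3,2,6$ and is then forced onto $5$ or $7$ shows $43$ is attained, so the bound is tight for this strategy. The points that need care are that Olivia really cannot stall beyond the stated number of visits and cannot be bounced into a field Xavier has already finished — which is exactly what \Q{1}--\Q{6} encode.
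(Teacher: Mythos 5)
Your phase-by-phase count (opening $=16$, middlegame $=2m+5$, endgame $\le 2(11-m)$ rounds via the $r_i$ accounting and the observation that the uncompleted pair contributes one round less than the naive maximum) is sound and in fact more quantitative than the paper's own argument, which simply asserts the state of the board at turn 43 and exhibits Olivia's longest lines of resistance for the two possible shapes of the leftover field. Your "shave one round" step is correct: since Xavier only ever adds spots $0$ and $8$ of a field $i\in A$ to his existing $(i,4)$, an unwon field is one he never played spot $8$ in, hence it absorbs at most $r_i^{\max}-1$ rounds, and at the end of the game the pair not completed contains such a field.

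The genuine gap is the phrase "taken to be field~$0$ after relabelling". The field Olivia leaves empty at the end of the opening is her choice, and if she leaves an \emph{edge} field ($1,3,5$ or $7$) empty, no symmetry of the board maps it to a corner, so your reduction fails: your argument uses corner-specific geometry, namely that Xavier's winnable line inside each field is the diagonal $0$--$4$--$8$ and that the only field-lines through field $8$ avoiding $0$ and $4$ are $\{2,5,8\}$ and $\{6,7,8\}$, giving the two disjoint pairs $\{2,5\}$, $\{6,7\}$. With an empty edge field, say field $1$, the analogous strategy plays spots $1$ and $7$, wins field $7$ via the column $1$--$4$--$7$, and the relevant field-lines become $\{6,7,8\}$, $\{0,3,6\}$ and $\{2,5,8\}$; the termination condition is no longer "complete one of two disjoint pairs", so your final shaving step must be redone (it does again yield at most $11-m$ endgame rounds, since at most four of the six remaining fields can be won without ending the game, but this needs its own short check). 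The paper is aware of exactly this dichotomy: its proof of \Cref{lem:upperbound} treats the cases "Olivia sent Xavier in field $0$" and "in field $1$" separately, with different winning lines in each. A minor slip, harmless to the bound: after the opening, field $4$ is not a draw --- Olivia owns all eight non-centre spots and therefore wins it; what matters for your count is only that Xavier can never use fields $0$ and $4$ in a winning line, which remains true.
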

\begin{proof}
We have shown that the strategy described in \Cref{sec:strategy} is winning, we just need to show that it takes at most 43 moves, and therefore that any optimal strategy takes at most 43 moves.

Indeed, at turn 43 only one cross $\X$ is missing in some field $h$ with $h\in A$. 
Thus, if Olivia sent Xavier in field 0 at the end of the opening, either the line composed of field 6,7,8 or the column composed of field 2,5,8 is won by Xavier. Similarly, if Olivia plays every spot before the spot 8 in field 0 during middlegame, and then plays every spot except 5 and 7 in field 8, and then plays $(8,5)$, she loses in 43 turns.
On the other hand, if Olivia sent Xavier in field 1 at the end of the opening, either the line composed of field 0,3,6 or the column composed of field 2,5,8 is won by Xavier.
If Olivia plays every spot before the spot 7 in field 1 during middlegame, and then plays every spot except 6 and 8 in field 8, and then plays $(7,6)$, she loses in 43 turns.
In each case, after 43 turns Xavier has won. \qed 
\end{proof}

%which means our bound seems optimal for the strategy given, maybe not optimal at all but the strategy given cannot give us a better bound.\\
%There is no way to spend less turns in the first part of the strategy, since Xavier needs the fourth field to be full for his strategy.\\

\begin{remark}
It seems that our winning strategy hinges on Xavier winning field $4$, which essentially makes the opening sequence the bottleneck.

\end{remark}

\subsection{Lower bound}
\begin{lemma}\label{lem:lbs29}
Olivia can resist Xavier's optimal strategy for at least 29 rounds.
\end{lemma}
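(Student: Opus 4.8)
The plan is to exhibit an explicit defensive strategy for Olivia and argue, by a potential/counting argument, that no matter how Xavier plays he cannot complete three aligned won fields before round $29$. The natural starting point is to observe what Xavier \emph{must} accomplish to win: he needs three won fields in a line, hence at least three fields in which he has placed three aligned marks. Winning a single field requires a minimum of three Xavier-moves in that field, but the field-switching rule makes it impossible for Xavier to pour three consecutive moves into the same field unless Olivia cooperates (by being sent elsewhere or into a full field). So the first step is a lemma bounding from below the number of rounds needed for Xavier to obtain his first won field, and then to obtain his third.

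The concrete approach I would take: describe Olivia's strategy as a set of simple local rules designed to (i) never hand Xavier a "free choice" of active field early (i.e., avoid playing into an already-full field, and avoid completing a field), and (ii) when forced to move inside a field where Xavier is one mark short of a line, block that line if possible. Because each of Olivia's moves, under these rules, either blocks a potential Xavier line in the active field or is sent to a fresh field, one can track a small number of quantities: the number of fields in which Xavier has two-in-a-line but no block, the number of fields Xavier has actually won, and the parity/turn count. The key arithmetic observation is that Xavier's moves are the odd-numbered ones ($1,3,5,\dots$), so by round $28$ Xavier has made $14$ moves; I would show that $14$ Xavier-moves, under Olivia's blocking discipline, are not enough to produce three won fields in a line — essentially because each won field "costs" Xavier more than $\lceil 14/3\rceil$ moves once Olivia's blocking forces him to spend extra moves re-entering fields and circumventing blocks, and the alignment constraint on the three winning fields further restricts which fields can be cheaply won.

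I expect the main obstacle to be making the blocking/counting argument airtight against \emph{all} Xavier strategies rather than just the natural one — in particular, handling the cases where Xavier deliberately plays to fill fields so as to gain free choice of the active field, or where Olivia is forced into a field and cannot simultaneously block two distinct threats. The cleanest way around this is probably to prove a sequence of small invariants maintained by Olivia's strategy (analogous to \Q{1}--\Q{6} in the winning-strategy analysis): e.g., "at every round before $29$, Xavier has at most $k$ won fields and at most $m$ fields with an unblocked two-in-a-line," with $k,m$ growing slowly in the round number, and to check that each of Xavier's possible move-types preserves the invariant with the stated growth rate. An induction on rounds then yields that at round $28$ the invariant still forbids a completed winning line.

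A secondary, more bookkeeping-heavy issue is that a "won field" can also be produced incidentally (Xavier completing a line he wasn't steering toward) or blocked incidentally by Olivia's forced moves; I would handle this by having Olivia's rule prioritize immediate threats (block any field where Xavier threatens to win \emph{this turn}) and only otherwise follow the fresh-field heuristic, and by noting that any line Xavier completes must pass through at least one spot Olivia was never forced to contest, which costs him the extra tempo captured in the count. Assembling these pieces — the minimum-cost-per-won-field bound, the alignment restriction on the three target fields, and the parity count that Xavier has only $14$ moves by round $28$ — gives the $29$-round lower bound.
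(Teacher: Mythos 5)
Your plan has a structural flaw that the paper's proof is specifically designed to avoid: the blocking rule you rely on (\enquote{block any field where Xavier threatens to win this turn}) is not implementable in U3T. Olivia never chooses which field she plays in --- she is forced into the active field determined by the spot of Xavier's previous move --- so she can only contest a threat inside a field if Xavier voluntarily sends her there, which he will not do for the fields he is trying to win. The only lever Olivia actually has is the choice of \emph{spot} within the field she is forced into, because that spot dictates where Xavier must play next. The paper's LBS strategy uses exactly this lever: Olivia always redirects Xavier into a field that contains no \X{} yet (playing spot $j$ in field $j$ if she is first there), and a short contradiction argument (Lemma~\ref{lem:lbs18}) shows that through turn 18 Xavier then has exactly one \X{} in every field. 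After that the bound is pure counting: three won fields need nine \X{}s, Xavier has one in each, so he needs at least six further moves, which occur at turns $19,21,\dots,29$, hence no win before turn 29. Your invariant-induction framing could in principle be repaired along these lines, but only if the invariants are about \emph{redirection} (how many \X{}s per field Xavier can have accumulated) rather than about in-field blocking.

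The second gap is the arithmetic you lean on. Three won fields require only nine Xavier moves if Olivia never gets to intervene in those fields, and at most four per field even against perfect in-field blocking (a fork costs three marks plus the winning one), so fourteen Xavier moves by round 28 are \emph{not} ruled out by any per-field cost of the order $\lceil 14/3\rceil$; the claim that each won field costs more than five moves is asserted without a mechanism that could enforce it, and you acknowledge yourself that making it airtight is the open obstacle. As it stands the proposal is a proof sketch whose central quantitative step is both unproved and, absent the redirection idea, unlikely to be provable; it does not constitute a proof of the 29-round bound.
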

We show this by providing an explicit strategy for Olivia, which ensures she does not lose before turn 29. The strategy, which we call \emph{LBS}, is as follows: when Olivia plays in field $j$, she chooses the spot $i$ where she plays in this order:
\begin{enumerate}
\item If Olivia is the first to play in field $j$, she chooses $i=j$.
\item Otherwise, Xavier already has a \ixe in field $j$\footnote{Note that if Xavier was not the first to play in field $j$, then Olivia was, and decided to send Xavier back to field $j$.}.
Olivia chooses $i$ such that there is no \ixe in field $i$ and $(j,i)$ is free.
If no such $i$ exists, she picks $i$ at random among the free spots.
\end{enumerate}
If Olivia is sent in a field that is already full, she plays in a random field $j$ according to the strategy above.

\begin{lemma}\label{lem:lbs18} At turn 18 (that is after Xavier and Olivia play 9 moves), if Olivia plays the LBS strategy, there is one and only one \ixe in each field.
\end{lemma}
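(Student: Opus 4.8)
The plan is to prove, by induction on $m \in \{1,\dots,9\}$, the sharper statement $S(m)$: after Xavier's $m$-th move (turn $2m-1$) the board carries exactly $m$ \ixe{}s and they lie in $m$ pairwise distinct fields. The lemma is then the case $m = 9$, since there are only nine fields, so after turn $17$ each holds exactly one \ixe, and Olivia's ninth move (turn $18$) merely adds an \0. It is worth checking first that all eighteen moves really take place: $S$ says Xavier never gets two marks in one field, so he has won no field and hence not the game; and through turn $17$ Olivia has placed only eight \0{}s, too few to own three fields, so she has not won either.

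The engine of the induction is an auxiliary invariant \textbf{(I)}, itself proved by the same induction: at the start of each of Olivia's turns, no field holds an \0 without also holding an \ixe. Property (I) is almost self-contained: an \0 enters a field only through step~1 (played on the diagonal spot of an \emph{empty} field) or step~2 (played into a field that already holds an \ixe), so a field that carries an \0 but no \ixe must have received that \0 by step~1 while empty; but right after the diagonal spot $(a,a)$ is filled, Xavier is forced into field $a$ on the very next move and marks it, so the ``\0-only'' state never survives into one of Olivia's turns. A second, purely arithmetic observation: at turns $\le 16$ the field into which Olivia is forced is never full, because a full field needs eight \0{}s while at most seven have been played. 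Hence before turn $18$ Olivia never triggers the outer ``random field'' fallback, and her move is genuinely a step~1 or a step~2.

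Granting this, consider Xavier's $(m{+}1)$-st move for $m \le 8$. On turn $2m$ Olivia either played the diagonal $(a,a)$ of an empty field (step~1), after which Xavier is forced into the one-mark field $a$; or she played into a field holding an \ixe while redirecting Xavier to some \ixe-free field $i$ (step~2), and by (I) that field $i$ was empty, hence still is. Either way Xavier's next move is forced into a non-full \ixe-free field and so opens a brand-new \ixe-field; combined with $S(m)$ this gives $S(m{+}1)$. The one loophole is that step~2 has its own fallback (``if no such $i$ exists, she picks $i$ at random''), which could send Xavier into a field he has already marked. Closing it is the real content, the \textbf{structural lemma}: through turn $16$, whenever Olivia must apply step~2 in a field $j$ that holds an \ixe, a legal target $i$ exists. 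I would prove it by showing that every occupied spot of field $j$ has an index equal to the number of some \ixe-field. For the \0{}s of $j$: each was placed either on the diagonal $(j,j)$ — and $j$ is itself an \ixe-field by assumption — or while redirecting Xavier into the indexed field, which he then immediately marks. For the single \ixe of $j$, at spot $s$: the move $(j,s)$ forces Olivia into field $s$ (which, a short move-count shows, is far from full that early), whereupon within one or two turns field $s$ acquires an \ixe — using (I) to see field $s$ is empty or already marked when Olivia arrives — or else $s = j$ and there is nothing to prove. Since at turn $2m$ there are exactly $m \le 8$ \ixe-fields (by $S(m)$), the indices Olivia must avoid — occupied spots of $j$, together with indices of \ixe-fields — number at most $m < 9$, leaving a legal target (exactly one when $m = 8$). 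The induction then runs to $S(9)$, which is the lemma.

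I expect the structural lemma to be the main obstacle, specifically pinning down the ``within one or two turns'' timing near turns $15$--$17$, where the forced field $j$ can already be stuffed with \0{}s and one must verify that the unique \ixe-free field among the nine is still reachable from $j$. The remainder is a mechanical unfolding of the two-move pattern ``Olivia seeds an empty field or redirects out of an \ixe-field, and Xavier follows.''
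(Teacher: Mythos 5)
Your argument is correct, but it is a genuinely different route from the paper's. The paper proves \Cref{lem:lbs18} by a single global contradiction: if some field $j$ had no \ixe at turn 18, then neither player ever played a spot with index $j$ (otherwise the opponent would have been sent into field $j$ and it would carry a mark of each player), so spot $j$ stays free in every field; by pigeonhole some field received two \ixe{}s, and at the moment Olivia redirected Xavier into that already-marked field she had the legal \ixe-free alternative $j$ available, contradicting LBS. You instead run a forward induction on Xavier's moves with the invariant that each of his moves opens a new \ixe-field, supported by the auxiliary facts that no field holds an \0 without an \ixe at Olivia's turns, that no field can be full (hence no fallback clause fires) before turn 18 by counting Olivia's marks, and the key structural claim that every occupied spot of the field Olivia is forced into indexes an \ixe-field, so the at most $m\le 8$ forbidden indices leave a legal rule-2 target. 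I checked the delicate timing point you flag: if the \ixe in the active field $j$ sits at spot $s$, either it was Xavier's immediately preceding move, forcing $s=j$, or it was placed at least two turns earlier, and Olivia's forced reply in field $s$ guarantees (via step 1 plus Xavier's forced answer, or via your invariant) that field $s$ is an \ixe-field before the current turn; so your simultaneous induction closes. The trade-off: the paper's proof is much shorter and needs no bookkeeping of intermediate states, while yours is longer but more informative --- it shows the random fallback of LBS is never invoked before turn 18, describes the position after every turn, and adapts directly to the generalization stated in the paper's remark about partially full boards (where your counting of forbidden indices plays the role of the paper's pigeonhole).
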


\begin{proof}[of \Cref{lem:lbs18}] Suppose that at turn 18, there exists a field $j$ where there are no \ixe\footnote{Note that this claim is the exact contradiction of the assumption of the theorem, because there are 9 \ixe in the grid at turn 18}. Then, Olivia did not play in field $j$ either before turn 18, else she would have sent Xavier in field $j$ at turn 17 or before. It follows that in every field, spot $j$ was free before turn 17. 

Moreover, at turn 17, Xavier had 9 \ixe in 8 fields, so one field has at least 2 \ixe. Let $k$ be such a field. At some point in the game, before turn 17, Olivia sent Xavier in field $k$, but there already was an \ixe in this field. However, Olivia could play in spot $j$ (because it was free for every field at this moment) and send Xavier to a field where there is no \ixe.

That means Olivia did not follow the LBS strategy at some point in the game, so there is a contradiction.
\qed 
\end{proof}
\begin{remark}
This proof can easily be adapted to account for the situation where some fields are full (let $E$ be the set of these fields) and all other fields have exactly the spots of $E$ taken. If there are $\alpha$ full fields, let $\beta = 9-\alpha$ be the number of \enquote{free} fields. Then Olivia can force Xavier to play exactly one \ixe in the $\beta$ \enquote{free} fields for his next $\beta$ turns.
\end{remark}

\begin{proof}[of \Cref{lem:lbs29}]
By \Cref{lem:lbs18}, after 18 turns, there is exactly one \ixe in each field. To win, Xavier needs to win 3 fields, in which at least 3 \ixe are required. That means Xavier has to play at least 6 times after turn 18. As a result, Xavier can not win before turn 29. \qed 
\end{proof}

\subsection{Optimal strategy's first two moves}
We call the nine spots $(i, i)$ on the board \emph{doubles}.
\begin{lemma}\label{lem:avoid}
Xavier's optimal strategy's first move is a double.
\end{lemma}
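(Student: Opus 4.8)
The plan is to argue by contradiction with optimality, exploiting the gap between the bounds we have already proved. By the construction of \Cref{sec:strategy} together with \Cref{lem:upperbound}, Xavier has a winning strategy whose first move is the double $(4,4)$ and that wins within $43$ moves; hence every optimal winning strategy also wins within $43$ moves. So it suffices to prove that whenever Xavier opens with a non-double $(i,j)$ (i.e. $i\neq j$), Olivia has a defence under which Xavier cannot win before round $44$. I would first cut the work down by symmetry: the dihedral group $D_4$ of the square acts on the board by permuting the nine fields and the nine spots simultaneously, this action commutes with the rules and therefore preserves the value of every position, so it is enough to analyse one representative of each $D_4$-orbit of non-double openings. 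Classifying the field and the spot each as corner, edge or centre and then by their relative position leaves a short, explicit list of cases --- a centre field with an edge or corner spot, an edge field with an adjacent-edge, opposite-edge, adjacent-corner or far-corner spot, and a corner field with each type of edge or corner spot.

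For each representative I would exhibit Olivia's defence and count moves. The guiding observation is that Xavier's explicit winning strategy is tuned around the double: opening $(4,4)$ simultaneously plants his mark on the centre of field $4$ and lets him answer each of Olivia's eight replies in the opening with a move $(\cdot,4)$ that both occupies the centre of a fresh field and bounces Olivia back into field $4$; it is this mechanism that lets him reach the endgame in time for the $43$-move bound of \Cref{lem:upperbound}. After a non-double opening $(i,j)$, Olivia is instead sent into the \emph{empty} field $j$, where LBS (rule~1 of \Cref{lem:lbs29}) or, equivalently, the \enquote{mirror} rule of answering $(a,b)$ by $(b,a)$ --- legal precisely until Xavier plays some double --- makes her plant her own mark so as to confine Xavier to one field and force him, sooner or later, to spend a move on an unproductive double there; after that move Olivia continues with LBS. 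One then bookkeeps that, compared with the $43$-move line, Xavier has been made to spend at least one extra round and cannot reclaim it while Olivia keeps control via properties \Q{1}--\Q{6} and the count of exactly one \X{} per field from \Cref{lem:lbs18}. This pushes the earliest Xavier win to round $44$ in every case, contradicting optimality.

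The anticipated difficulty is bookkeeping rather than any single idea: for each orbit representative one must verify that Olivia's prescribed moves stay legal up to the forced double, that after she switches to LBS the active-field and full-field constraints never let Xavier skip or recover the lost round (in particular that he cannot exploit the free-choice clause), and that the worst of Xavier's replies really attains round $44$ and not an earlier win. I expect the centre-field openings $(4,j)$ with $j\neq 4$ to be the delicate sub-case, since there the disrupted centre is that of field $4$ itself and one needs a careful comparison with \Q{1}--\Q{6} to confirm the extra round; the remark following \Cref{lem:upperbound}, which identifies the opening sequence as the bottleneck, is the informal reason to expect the loss to materialise there.
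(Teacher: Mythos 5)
There is a genuine gap, and it lies exactly where your sketch defers to \enquote{bookkeeping}. Your overall skeleton (use \Cref{lem:upperbound} to cap the optimum at 43, then show a non-double opening lets Olivia delay past that) is the same as the paper's, but your quantitative argument does not work: losing \enquote{one extra round compared with the 43-move line} is not a proof that every Xavier strategy opening with $(i,j)$, $i\neq j$, needs at least 44 moves. The 43-move figure is an \emph{upper} bound on the optimum obtained from one particular strategy; a tempo-loss comparison against that single line gives no \emph{lower} bound on what an arbitrary strategy starting from the non-double opening can achieve, and the paper's only general lower-bound tool (LBS, \Cref{lem:lbs29}) yields 29, nowhere near 44. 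Moreover, the defence you describe is too weak to force even the one lost tempo you want: the \enquote{mirror} confinement you propose ends as soon as Xavier chooses to play a double (or, note, already when he plays $(j,i)$, whose mirror $(i,j)$ is his own opening move and is not a double), so Xavier can exit it on his second move, and your subsequent claim that he \enquote{cannot reclaim} the round is asserted, not argued.

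The paper closes this gap with a concrete, uniform defence that makes the delay large enough that no fine accounting is needed: Olivia answers $(i,j)$ with $(j,j)$ and then, wherever she is sent, plays spot $j$ if it is free and spot $i$ otherwise. This ping-pongs Xavier between fields $i$ and $j$ until both are full (he may win those two fields, but they cannot form a line by themselves), at which point 16 \ixe and 16 \0 have been placed; Olivia then switches to LBS, which (as in \Cref{lem:lbs18} and the remark following it) limits Xavier to at most one \ixe in each remaining field for the next 14 turns. Hence Xavier has not won after 46 turns, which contradicts optimality via \Cref{lem:upperbound} with a comfortable margin and without any $D_4$ case analysis — the same response $(j,j)$ and spot-$j$-else-spot-$i$ rule works for every non-double opening. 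If you want to salvage your approach, you must either adopt such an explicit confinement that provably lasts many rounds, or prove a genuine position-dependent lower bound of at least 44 after the non-double opening; the one-tempo comparison cannot substitute for either.
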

\begin{proof}
The idea of the proof is that if the first move of Xavier is not a double, Olivia can force Xavier's moves between two fields, which ensures Xavier doesn't win, for long enough. 

Suppose that the first move of Xavier is $(i,j)$ with $i\neq j$. Then Olivia plays in $(j,j)$. 
Now Olivia uses a strategy similar to the one used by Xavier with field $j$ as field $0$ and field $j$ as field $8$: wherever Olivia get sent, if she can play in spot $j$ she does so, otherwise she plays in spot $i$. This process ends when fields $i$ and $j$ are full. 

Since the strategy is essentially identical to Xavier's endgame in the strategy described in \Cref{sec:strategy}, it will not be detailed further here.

There are $16$ \ixe and $16$ \0 played, and Xavier can play wherever he wants. Now Olivia can reuse the LBS which will guarantee that during the $14$ next turns, Xavier will have at most one \ixe in each field except fields $i$ and $j$. 

Thus after $46$ turns, \ixe has not won the game. Therefore by \Cref{lem:upperbound} Xavier's strategy cannot be optimal. \qed 
\end{proof}
An illustration of Olivia's strategy is given in \Cref{game:avoid}. 
\begin{figure}[!hpt]
\centering
\scalebox{0.5}{
\tikzset{every node/.style={font=\sffamily}}
% use scale and fontsize options to change size
\begin{lpsudoku}[scale=1]
\setrow{9}{\0,\X,\X,\0,\V,\V,\0,\V,\V}
\setrow{8}{\X,\X,\X,\V,\V,\V,\V,\V,\V}
\setrow{7}{\X,\X,\X,\V,\V,\0,\V,\V,\0}
\setrow{6}{\0,\V,\V,\0,\V,\V,\0,\V,\V}
\setrow{5}{\V,\V,\V,\V,\V,\V,\V,\V,\V}
\setrow{4}{\V,\V,\0,\V,\V,\0,\V,\V,\0}
\setrow{3}{\0,\V,\V,\0,\V,\V,\X,\X,\X}
\setrow{2}{\V,\V,\V,\V,\V,\V,\X,\X,\X}
\setrow{1}{\V,\V,\0,\V,\V,\0,\X,\X,\0}
\end{lpsudoku}
}
\caption{Assume that Xavier's first move was $(0,8)_\X$, then by following the strategy of \Cref{lem:avoid} Olivia ensures that Xavier doesn't win the game for at least 46 turns.}\label{game:avoid}
\end{figure}

\begin{lemma}\label{lem:avoid2}
If Olivia first move is $(i,j)$, then Xavier must play $(j,i)$    
\end{lemma}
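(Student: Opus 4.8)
The plan is to rerun the proof of \Cref{lem:avoid}, shifted by one move. By \Cref{lem:avoid} we may assume Xavier's first move is a double; since Olivia is then forced to answer inside that field and her move is $(i,j)$, Xavier's first move was $(i,i)$, with $j\neq i$. Having played spot $j$, Olivia forces Xavier to reply inside field $j$, so his move is $(j,k)$ for some $k$ (possibly $k=j$). I will show that whenever $k\neq i$ Olivia has a defence under which Xavier has not won by turn $43$; by \Cref{lem:upperbound} such a strategy is then not optimal, so $k=i$, i.e.\ the reply must be $(j,i)$, which is the claim.

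Assume therefore that the second move is $(j,k)$ with $k\neq i$; it sends Olivia into field $k$. If $k=j$ the double $(j,j)$ has just been occupied by this very move; otherwise field $k$ is still empty and Olivia begins by playing $(k,k)$. Either way we reach a position with two distinct fields $P,Q$ --- where $\{P,Q\}$ equals $\{i,j\}$ or $\{i,k\}$ --- whose doubles $(P,P)$ and $(Q,Q)$ are both occupied, while every other mark placed so far lies inside $P$, inside $Q$, or is one of $(i,j)_{\0}$ and $(j,k)_{\X}$ (the latter only when $k\neq j$). From here Olivia imitates Xavier's endgame of \Cref{sec:strategy}, with the fields $P,Q$ and spot-indices $P,Q$ playing the roles of fields $0,8$ and spots $0,8$: in whichever non-full field $c$ she is sent to, she plays $(c,P)$ if that spot is free and field $P$ is not full, else $(c,Q)$ if that spot is free and field $Q$ is not full, else an arbitrary free spot. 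Exactly as in \Cref{sec:strategy} and \Cref{lem:avoid}, the fact that $(P,P)$ and $(Q,Q)$ are already taken forces every later move of Xavier into field $P$ or field $Q$; and since Xavier, now confined to $P$ and $Q$, can send Olivia to any given field at most twice while her two preferred spots there are distinct, she never has to resort to an arbitrary move before $P$ and $Q$ are full. A count as in \Cref{lem:avoid} then shows that, since at most four of the eighteen spots of $P\cup Q$ can be Olivia's, Xavier must contribute at least fourteen marks there, so $P$ and $Q$ are not full before roughly turn $29$; and at that point every $\X$ on the board lies inside $P\cup Q$ except at most the single mark $(j,k)_{\X}$, so Xavier has won at most the two fields $P$ and $Q$.

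Once $P$ and $Q$ are full, Olivia switches to \emph{LBS}. By the remark following \Cref{lem:lbs18}, applied with $E=\{P,Q\}$ so that $\beta=7$, she then forces Xavier to place at most one further $\X$ in each of the remaining fields over his next seven turns, that is over the next fourteen turns; since before that he had at most one $\X$ outside $P\cup Q$, he still owns only a subset of $\{P,Q\}$ fourteen turns later and hence has no three aligned won fields. Combining the two estimates, the pin phase together with the ensuing fourteen LBS turns covers at least the first $43$ turns, and throughout them Xavier holds at most two won fields, so he has not won by turn $43$; by \Cref{lem:upperbound} the strategy is not optimal, completing the proof. I expect the genuine work to lie in (i) verifying that Olivia never has to make an arbitrary move before $P$ and $Q$ fill up --- the case $k=j$ and the ways the spots $(P,Q)$ and $(Q,P)$ can get used up call for the same kind of care as the $\mathsf{P1}$--$\mathsf{P6}$ bookkeeping in \Cref{sec:strategy} --- and (ii) making the pin count and the LBS count meet at least at turn $43$ rather than merely near it; both are routine, which is presumably why the analogous steps in \Cref{lem:avoid} are only sketched.
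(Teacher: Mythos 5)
Your proposal takes essentially the same route as the paper's own proof: pin Xavier between the two fields $\{i,k\}$ (or $\{i,j\}$ when $k=j$) using an endgame-style strategy for Olivia, then switch to LBS for fourteen further turns, and conclude via \Cref{lem:upperbound} that any second move other than $(j,i)$ cannot be optimal. Your turn count (\enquote{at least 43}) is slightly looser than the paper's explicit tally of 15 \ixe{}and 15 \0 marks giving 44 turns (resp.\ 46 when $k=j$), but at the level of detail the paper itself adopts the argument is the same.
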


\begin{proof}
Let $(i,i)$ be Xavier's first move, that Olivia plays $(i,j)$ and that the Xavier's second move is $(j,k)$ with $k\neq i$. In a first time, let's assume $k \neq j$.

Olivia has a strategy to block Xavier in fields $k$ and $i$, similar to the one described in the previous proof. Wherever Olivia get sent, if she can play in spot $k$ she does so otherwise she plays in spot $i$. This process will end with field $i$ and field $j$ full, (the only difference with the previous proof is the case of spot $(k,j)$ and spots $(i,k)$ and $(k,i)$. 

There are $15$ \ixe and $15$ \0 played, and Xavier can play wherever he wants. Now Olivia can reuse the LBS which will guarantee that during the $14$ next turns, Xavier will have at most two \ixe in each field except fields $i$ and $j$. 

Thus after $44$ turns, \ixe has not won the game: by \Cref{lem:upperbound} Xavier's strategy is not optimal. 

Finally, if $k=j$ then Olivia can reuse the strategy described in the previous proof (with $i,j$ as given), ensuring that Xavier doesn't win in less than 46 steps, which shows that Xavier's strategy is not optimal either  \qed 
\end{proof}
\Cref{game:avoid2} shows a typical board after this strategy has been used, for $i=0$ and $j=4$ and $k=8$.

\begin{figure}[!hpt]
\centering
\scalebox{0.5}{
\tikzset{every node/.style={font=\sffamily}}
% use scale and fontsize options to change size
\begin{lpsudoku}[scale=1]
\setrow{9}{\X,\X,\X,\0,\V,\V,\0,\V,\V}
\setrow{8}{\X,\0,\X,\V,\V,\V,\V,\V,\V}
\setrow{7}{\X,\X,\0,\V,\V,\0,\V,\V,\0}
\setrow{6}{\0,\V,\V,\0,\V,\V,\0,\V,\V}
\setrow{5}{\V,\V,\V,\V,\V,\V,\V,\V,\V}
\setrow{4}{\V,\V,\0,\V,\V,\X,\V,\V,\0}
\setrow{3}{\0,\V,\V,\0,\V,\V,\X,\X,\X}
\setrow{2}{\V,\V,\V,\V,\V,\V,\X,\X,\X}
\setrow{1}{\V,\V,\0,\V,\V,\0,\X,\X,\0}
\end{lpsudoku}
}
\caption{A typical board following the strategy of \Cref{lem:avoid2} if the moves were $(0,0)_\X, (0, 4)_\0, (4, 8)_\X$.}\label{game:avoid2}
\end{figure}

\section{Conclusion}
We described a winning strategy for the first player that wins in at most 43 moves, and a strategy for the second player that doesn't lose for at least 29 moves. Thus unlike the standard game of tic-tac-toe the first player has a substantial advantage: he wins even against a perfect adversary as long as he doesn't make a mistake. Our strategies, for both players, may in turn help design more efficient AI players for this game beating the constructions from \cite{sista2016adversarial,chen2018ai}.

\bibliographystyle{alpha}
\bibliography{u3t.bib}

\iffalse 
\begin{figure}
\centering
\tikzset{every node/.style={font=\sffamily}}
% use scale and fontsize options to change size
\begin{lpsudoku}[scale=1.0]
\setrow{9}{\V,\V,\V,\V,\V,\V,\V,\V,\V}
\setrow{8}{\V,\V,\V,\V,\V,\V,\V,\V,\V}
\setrow{7}{\V,\V,\V,\V,\V,\V,\V,\V,\V}
\setrow{6}{\V,\V,\V,\V,\V,\V,\V,\V,\V}
\setrow{5}{\V,\V,\V,\V,\V,\V,\V,\V,\V}
\setrow{4}{\V,\V,\V,\V,\V,\V,\V,\V,\V}
\setrow{3}{\V,\V,\V,\V,\V,\V,\V,\V,\V}
\setrow{2}{\V,\V,\V,\V,\V,\V,\V,\V,\V}
\setrow{1}{\V,\V,\V,\V,\V,\V,\V,\V,\V}
\end{lpsudoku}
\end{figure}
\fi

\end{document}